\newtheorem{theorem}{Theorem}
\newtheorem{lemma}[theorem]{Lemma}
\theoremstyle{definition}
\newtheorem*{remark}{Remark}
\newtheorem{example}[theorem]{Example}
\newenvironment{keywords}{{\bf Keywords:}}{}
\title{A Note on the Inheritance of the Isometry-Dual Property under Puncturing AG Codes}
\author{Maria Bras-Amorós}
\begin{document}
\maketitle

\begin{abstract}
Consider a sequence of AG codes evaluating at a set of
evaluation points $P_1,\dots,P_n$ the functions having only poles at a
defining point $Q$, with the sequence of codes satisfying the
isometry-dual condition (i.e. containing at the same time primal and their dual codes). We prove a necessary condition under which,
after taking out a number of evaluation points (i.e. puncturing), the
resulting AG codes can still satisfy the isometry-dual property. 
The condition has to do with the so-called maximum sparse ideals of
the Weierstrass semigroup of $Q$. 
\end{abstract}

\begin{keywords}
AG code, punctured code, dual code
\end{keywords}

\section{Introduction}
Two codes $C,D\subseteq{\mathbb F}_q^n$ are said to be $x$-isometric, for $x\in{\mathbb F}_q^n$ if and only if the map $\chi_x:{\mathbb F}_q^n\rightarrow{\mathbb F}_q^n$ given by the component-wise product $\chi_x(v)=x * v$ satisfies 
$\chi_x(C)=D$. Then, a sequence of codes $(C^i)_{i={0,\dots,n}}$ is said to satisfy the {\it isometry-dual condition} if there exists $x\in({\mathbb F}_q^*)^n$ such that $C^i$ is $x$-isometric to $(C^{n-i})^\perp$ for all $i=0,1,\dots,n$.
Sequences of one-point AG codes satisfying the isometry-dual condition
were characterized in \cite{GMRT} in terms of the Weierstrass
semigroup at the defining point. The result can be stated in terms of
maximum sparse ideals of numerical semigroups, that is, ideals whose gaps
are maximal. 

In this contribution we analyze the effect of puncturing
a sequence of AG codes satisfying the isometry-dual property in terms
of the inheritance of this property.

In Section 2 we introduce and characterize maximum sparse ideals.
In Section 3 we analyze the inclusion among maximum sparse ideals. In
Section 4 we prove that the set of maximum sparse ideals is in bijection with
another ideal. In Section 5 we prove that the isometry-dual property can be inherited after puncturing a sequence of
codes only if the number of punctured coordinates is a non-gap of the
Weierstrass semigroup at the defining point. In particular, the
property is not inherited in general if one only takes out one coordinate.

\section{Maximum sparse ideals}

A {\em numerical semigroup} $S$ is a subset of ${\mathbb N}_0$ that contains $0$, is closed under addition and has a finite complement in ${\mathbb N}_0$.
An {\em ideal} $I$ of a numerical semigroup $S$ is a subset of $S$
such that $I+S\subseteq I$. We say that $I$ is a {\em proper} ideal of $S$ if $I\neq S$.

Denote the elements of $S$, in increasing ordre, by $\lambda_0=0,
\lambda_1,\dots$ and call {\em genus} the number $g=\#{\mathbb
  N}_0\setminus S$. The {\em conductor} $c$ of the semigroup is the smallest integer such that $c+{\mathbb N}_0\subseteq S$.

It is proved in \cite{BLV} that the largest integer not belonging to
an ideal, which is called the {\em Frobenius number} of the ideal is at most $2g-1+\# (S\setminus I)$.
The ideals whose Frobenius number attains this bound will be called {\it maximum sparse ideals}. 

It is also proved in \cite{BLV} that, letting $G(i)$ be the number of pairs of gaps adding up to $\lambda_i$, a proper ideal $I$ is maximum sparse if and only if $I=S\setminus D(i)$ for some $i$ with $G(i)=0$. In this case we denote $\lambda_i$ the leader of the maximum sparse ideal $I$ and it coincides with the maximum gap of the ideal.

\begin{lemma}
The leaders of proper maximum sparse ideals are always at least the conductor.
\end{lemma}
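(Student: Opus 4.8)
The plan is to argue by contradiction, playing two estimates for the Frobenius number of the ideal off against each other: the exact value coming from maximum sparseness, and an upper bound forced by having the leader below the conductor.

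First I would fix notation. Let $I$ be a (nonempty) proper maximum sparse ideal, let $\lambda_\ell$ denote its leader --- that is, the largest element of $S\setminus I$, which is the maximum gap of the ideal --- and let $F(I)$ be the Frobenius number of $I$, the largest integer not belonging to $I$. The genus-$0$ case is trivial, since then $S={\mathbb N}_0$ and $c=0\le\lambda_\ell$, so I assume $g\ge 1$. I would then record three elementary facts: (i) every gap of $S$ is strictly smaller than $c$, because $c+{\mathbb N}_0\subseteq S$; (ii) $\#(S\setminus I)\ge 1$, since $I$ is a proper subset of $S$; and (iii) an integer fails to belong to $I$ exactly when it is either a gap of $S$ or an element of $S\setminus I$, whence $F(I)=\max\{\,c-1,\ \lambda_\ell\,\}$.

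Next comes the main step. Suppose, for contradiction, that $\lambda_\ell<c$. Then by (i) and (iii) every integer outside $I$ is $<c$, so $F(I)\le c-1$; combining this with the classical inequality $c\le 2g$ for numerical semigroups of positive genus --- which follows, for instance, from the injection $s\mapsto c-1-s$ sending each non-gap of $\{1,\dots,c-1\}$ to a gap of $\{1,\dots,c-2\}$ --- gives $F(I)\le 2g-1$. On the other hand, since $I$ is maximum sparse it attains the bound recalled from \cite{BLV}, so by (ii) we obtain $F(I)=2g-1+\#(S\setminus I)\ge 2g$. This is incompatible with $F(I)\le 2g-1$, and the contradiction forces $\lambda_\ell\ge c$, as desired.

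I do not expect a real obstacle: the only point requiring care is fact (iii), namely the observation that the Frobenius number of the ideal simultaneously accounts for the gaps of $S$ and for the missing non-gaps, so that placing the leader below the conductor pins $F(I)$ down to exactly $c-1$ --- after which the ``sparseness budget'' $2g-1+\#(S\setminus I)$ is visibly too large to fit under $c-1\le 2g-1$. It is worth noting that the argument uses only that $I$ is proper and attains the \cite{BLV} bound; the explicit description $I=S\setminus D(i)$ with $G(i)=0$ is not needed, although (ii) may equivalently be read as $\#D(\ell)\ge 1$, which holds because $0\in D(\ell)$.
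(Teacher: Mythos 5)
Your proof is correct and rests on exactly the same ingredients as the paper's: the identity $F(I)=2g-1+\#(S\setminus I)$ for maximum sparse ideals, properness giving $\#(S\setminus I)\ge 1$, and the classical bound $c\le 2g$. The only difference is organizational --- you argue by contradiction and are somewhat more explicit about why the Frobenius number of the ideal coincides with its leader (via $F(I)=\max\{c-1,\lambda_\ell\}$), whereas the paper simply identifies $F(I)$ with $\max D(i)=\lambda_i$ and concludes $\lambda_i=2g-1+\#(S\setminus I)\ge 2g\ge c$ directly.
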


\begin{proof}
  The Frobenius number of the maximum sparse ideal is $2g-1+\#(S\setminus I)$, which, since the ideal is $S\setminus D(i)$, equals the maximum of $D(i)$. But the maximum of $D(i)$ is $\lambda_i$, that is, the leader of the ideal.
  Then $\lambda_i=2g-1+\#(S\setminus I)$. Since the ideal is proper, $\#(S\setminus I)\geq 1$, and so $2g-1+\#(S\setminus I)\geq 2g\geq c$. Hence, $\lambda_i\geq c$.
\end{proof}

\section{Inclusion among maximum sparse ideals}

\begin{lemma}
\label{l:diffnongaps}
If the proper ideals $I,I'$ are maximum sparse, with leaders $\lambda_i,\lambda_{i'}$, and $I'\supseteq I$, then $\lambda_i-\lambda_{i'}\in S$.
\end{lemma}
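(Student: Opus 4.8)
The plan is to reduce everything to the concrete description of the set $D(i)$ referred to just before the statement, namely (from \cite{BLV}) $D(i)=\{s\in S:\lambda_i-s\in S\}$; equivalently, $D(i)$ is the set of nongaps that get removed from $S$ to form the ideal, and its maximum is $\lambda_i$, which is exactly why the leader coincides with the maximum gap of the ideal. Write $I=S\setminus D(i)$ and $I'=S\setminus D(i')$ with $D(i')=\{s\in S:\lambda_{i'}-s\in S\}$. In particular $\lambda_{i'}\in D(i')$, simply because $\lambda_{i'}-\lambda_{i'}=0\in S$ (the leader always belongs to its own set $D(\,\cdot\,)$).

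The argument itself is then three short steps. First I would translate the hypothesis: since $I$ and $I'$ are the complements inside $S$ of $D(i)$ and $D(i')$, the inclusion $I'\supseteq I$ is equivalent to $D(i')\subseteq D(i)$. Second, combining this with $\lambda_{i'}\in D(i')$ gives $\lambda_{i'}\in D(i)$. Third, I would read off what membership in $D(i)$ means: $\lambda_{i'}\in D(i)$ says precisely that $\lambda_i-\lambda_{i'}\in S$, which is the claim. (There is no sign issue: $D(i')\subseteq D(i)$ together with $\lambda_{i'}=\max D(i')$ and $\lambda_i=\max D(i)$ forces $\lambda_{i'}\le\lambda_i$.)

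I do not expect a genuine obstacle once the shape of $D(i)$ is available; the only lines needing a moment's care are the equivalence $I'\supseteq I\iff D(i')\subseteq D(i)$ and the observation that the leader lies in its own $D(\,\cdot\,)$, both immediate. If instead one prefers to keep the note self-contained and not invoke the precise form of $D(i)$, the mild work moves to re-establishing that $S\setminus I=\{s\in S:\lambda_i-s\in S\}$: the inclusion $\supseteq$ holds for any proper ideal with $\lambda_i\notin I$ (if $s\in I$ and $\lambda_i-s\in S$, then $\lambda_i=s+(\lambda_i-s)\in I$, a contradiction), while equality follows by a counting argument --- the maximum-sparse equality gives $\#(S\setminus I)=\lambda_i+1-2g$ (using that the leader equals the Frobenius number of the ideal, and that for proper maximum sparse ideals the leader exceeds the conductor, so that this Frobenius number is indeed $\lambda_i$), and splitting the $\lambda_i+1$ ordered pairs $(x,y)$ with $x+y=\lambda_i$ into the four gap/nongap types --- where the gap--gap type is empty because $G(i)=0$ --- yields $\#\{s\in S:\lambda_i-s\in S\}=\lambda_i+1-2g$ as well, so the two sets coincide. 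That pair-counting is the one place where the maximum-sparse hypothesis is really used, and it is where I would be most careful.
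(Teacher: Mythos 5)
Your argument is correct and is essentially the paper's own proof: pass to complements to get $D(i')\subseteq D(i)$, note $\lambda_{i'}\in D(i')$, and read off $\lambda_i-\lambda_{i'}\in S$ from membership in $D(i)=\{s\in S:\lambda_i-s\in S\}$. The extra counting argument you sketch to re-derive $S\setminus I=D(i)$ is not needed in the paper, which simply takes that identification from \cite{BLV}.
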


\begin{proof}
The inclusion $I\supseteq I$ implies $(S\setminus I')\subseteq (S\setminus I)$. We know that $(S\setminus I)=D(i)$, $(S\setminus I')=D(i')$, so, $D(i')\subseteq D(i)$. This is equivalent to $\lambda_{i'}\in D(i)$ which means that $\lambda_i-\lambda_i'\in S.$
  \end{proof}

\begin{lemma}
\label{l:diffcards}
If the proper ideals $I,I'$ are maximum sparse and $I'\supseteq I$, then $\#(S\setminus I)-\#(S\setminus I')\in S$.
\end{lemma}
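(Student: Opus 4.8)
The plan is to reduce the statement directly to Lemma~\ref{l:diffnongaps} by expressing each cardinality $\#(S\setminus I)$ and $\#(S\setminus I')$ in terms of the corresponding leader. The bridge is the defining equality of a maximum sparse ideal: since $I$ is a proper maximum sparse ideal, its Frobenius number attains the bound $2g-1+\#(S\setminus I)$, and as recalled above (and used in the proof of the lemma on leaders), writing $I=S\setminus D(i)$ that Frobenius number is precisely $\max D(i)=\lambda_i$. Hence $\lambda_i=2g-1+\#(S\setminus I)$, and the analogous identity holds for $I'$ with its leader $\lambda_{i'}$.

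First I would record both identities,
$$\lambda_i=2g-1+\#(S\setminus I),\qquad \lambda_{i'}=2g-1+\#(S\setminus I'),$$
and subtract the second from the first. The common summand $2g-1$ cancels, yielding
$$\lambda_i-\lambda_{i'}=\#(S\setminus I)-\#(S\setminus I').$$
So the quantity we must show belongs to $S$ is, verbatim, the difference of the two leaders.

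Finally I would invoke Lemma~\ref{l:diffnongaps}: because $I$ and $I'$ are proper maximum sparse ideals with $I'\supseteq I$, that lemma already gives $\lambda_i-\lambda_{i'}\in S$. Combining this with the displayed equality gives $\#(S\setminus I)-\#(S\setminus I')\in S$, as claimed. I do not expect a real obstacle here: the only point requiring care is that the Frobenius-number bound is used as an \emph{equality} for each of the two ideals, which is exactly what being maximum sparse guarantees; once Lemma~\ref{l:diffnongaps} is cited, no further use of the inclusion $I'\supseteq I$ is needed, and the rest is bookkeeping.
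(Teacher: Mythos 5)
Your proposal is correct and follows essentially the same route as the paper: it uses the maximum sparse equality $\lambda_i=2g-1+\#(S\setminus I)$ (and likewise for $I'$) to identify the difference of cardinalities with the difference of leaders, and then cites Lemma~\ref{l:diffnongaps}. No gaps.
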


\begin{proof}
Suppose that the leaders of $I,I'$ are $\lambda_i,\lambda_{i'}$.
  Since both $I$ and $I'$ are maximum sparse, $\#(S\setminus I)=(\lambda_i-(2g-1))$ and $\#(S\setminus I')=(\lambda_{i'}-(2g-1))$. Consequently,
$\#(S\setminus I)-\#(S\setminus I')=\lambda_i-\lambda_{i'}$ which, by Lemma~\ref{l:diffnongaps}, belongs to $S$.
\end{proof}

\begin{remark}
The converse is not true in general. Suppose that $\lambda_i$ is the leader of a maximum sparse ideal and that $\lambda_{i'}$, which is at least the conductor, satisfies $\lambda_i-\lambda_{i'}\in S$. This does not imply that $\lambda_{i'}$ is the leader of any maximum sparse ideal, unless $D(i')=0$.
\end{remark}

The previous results lead to the next theorem.

\begin{theorem}
  For two proper sparse ideals $I,I'$ with leaders $\lambda_i,\lambda_{i'}$, the following are equivalent:
  \begin{itemize}
  \item $I'\supseteq I$
  \item $\lambda_i-\lambda_{i'}\in S$.
  \item $S\setminus I'\subseteq S\setminus I$.
  \item $\#(S\setminus I)-\#(S\setminus I)\in S$.
  \end{itemize}
\end{theorem}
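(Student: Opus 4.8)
The plan is to prove the chain of equivalences by combining the two lemmas just established with one short direct computation and one trivial observation, after correcting two evident typos in the statement: the ideals should be \emph{maximum} sparse, and the fourth bullet should read $\#(S\setminus I)-\#(S\setminus I')\in S$. Label the four conditions (a)--(d) in the order listed.

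First I would dispatch (a) $\Leftrightarrow$ (c): since $I$ and $I'$ are both subsets of $S$, complementing within $S$ turns the inclusion $I\subseteq I'$ into $S\setminus I'\subseteq S\setminus I$ and conversely, so these are literally the same statement. Next, (a) $\Rightarrow$ (b) is exactly Lemma~\ref{l:diffnongaps} and (a) $\Rightarrow$ (d) is exactly Lemma~\ref{l:diffcards}. It therefore remains only to close the cycle, for which I would prove (d) $\Rightarrow$ (b) and (b) $\Rightarrow$ (c).

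For (d) $\Rightarrow$ (b): since both ideals are maximum sparse with leaders $\lambda_i,\lambda_{i'}$, the computation already carried out in the proof of Lemma~\ref{l:diffcards} gives $\#(S\setminus I)=\lambda_i-(2g-1)$ and $\#(S\setminus I')=\lambda_{i'}-(2g-1)$, whence $\#(S\setminus I)-\#(S\setminus I')=\lambda_i-\lambda_{i'}$; so (d) and (b) are in fact the same assertion. For (b) $\Rightarrow$ (c): recall that $S\setminus I=D(i)=\{\lambda_j\in S:\lambda_i-\lambda_j\in S\}$ and likewise $S\setminus I'=D(i')$. Assuming $\lambda_i-\lambda_{i'}\in S$, take any $\lambda_j\in D(i')$, so that $\lambda_{i'}-\lambda_j\in S$; then $\lambda_i-\lambda_j=(\lambda_i-\lambda_{i'})+(\lambda_{i'}-\lambda_j)$ is a sum of two elements of $S$, hence lies in $S$, i.e.\ $\lambda_j\in D(i)$. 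Thus $D(i')\subseteq D(i)$, which is (c). Combined with the implications noted above, this gives the cycle (a) $\Rightarrow$ (b), (a) $\Rightarrow$ (d), and (d) $\Rightarrow$ (b) $\Rightarrow$ (c) $\Rightarrow$ (a), so all four conditions are equivalent.

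The only point that is not pure bookkeeping is the step (b) $\Rightarrow$ (c), and there the delicate ingredient is the explicit description of $S\setminus I$ as $D(i)=\{\lambda_j\in S:\lambda_i-\lambda_j\in S\}$ — this is precisely the fact that underlies the terse line ``$D(i')\subseteq D(i)$ is equivalent to $\lambda_{i'}\in D(i)$'' in the proof of Lemma~\ref{l:diffnongaps}, and it is where the maximum-sparse hypothesis is implicitly used (it guarantees that $S\setminus I$ has the form $D(i)$ for the leader $\lambda_i$). Once that description is in hand, the whole argument reduces to the closure of $S$ under addition and to taking complements inside $S$, so I do not expect any serious obstacle.
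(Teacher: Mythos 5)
Your proof is correct and follows the route the paper intends: the paper gives no explicit proof of this theorem, merely asserting that it follows from the two preceding lemmas, and your argument assembles exactly those lemmas, correctly fixes the two typos, and supplies the one genuinely missing step (the converse direction via the description $S\setminus I=D(i)=\{\lambda_j\in S:\lambda_i-\lambda_j\in S\}$, which is the same fact underlying the terse equivalence in the paper's proof of Lemma~\ref{l:diffnongaps}). Nothing to object to.
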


\section{The ideal of sparse ideal leaders}

\begin{lemma} 
The set $L$ of non-zero non-gaps $\lambda_i$ such that $G(i)=0$ is an ideal of $S$.
\end{lemma}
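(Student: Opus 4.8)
The plan is to prove directly that $L+S\subseteq L$. Fix a non-zero non-gap $\lambda=\lambda_i$ with $G(i)=0$ and an arbitrary $s\in S$; since $S$ is closed under addition, $\lambda+s$ is again a non-zero non-gap, say $\lambda+s=\lambda_j$, so the whole task reduces to showing $G(j)=0$, i.e.\ that $\lambda+s$ cannot be written as a sum of two gaps. I would argue by contradiction, assuming $\lambda+s=a+b$ with $a\le b$ both gaps, and then try to ``absorb'' $s$ into the larger gap $b$.

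The core step is short: if $b\ge s$, then $b-s\ge 0$; it cannot lie in $S$, for otherwise $b=(b-s)+s\in S$ would contradict that $b$ is a gap, and it is non-zero, for otherwise $b=s\in S$; hence $b-s$ is itself a gap, and then $\lambda=a+(b-s)$ writes $\lambda$ as a sum of two gaps, contradicting $G(i)=0$. So everything hinges on excluding the case $b<s$. This is the point where the earlier material of the paper is needed: if $b<s$ then $\lambda=a+b-s<a$, so $\lambda$ would be a non-gap lying strictly below a gap, hence strictly below the conductor $c$. But $G(i)=0$ together with $\lambda_i\ne 0$ forces $S\setminus D(i)$ to be a proper maximum sparse ideal with leader $\lambda_i$ (the characterization from \cite{BLV} recalled above), and by the first lemma of Section~2 such a leader satisfies $\lambda_i\ge c$ --- a contradiction. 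Therefore $b\ge s$, the previous paragraph applies, $G(j)=0$, and $\lambda+s\in L$, so $L$ is an ideal of $S$.

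I expect the only genuinely delicate point to be this case distinction: realizing that the ``bad'' configuration $b<s$, in which $s$ cannot be cancelled against either gap, is precisely what the lower bound $\lambda\ge c$ on leaders of maximum sparse ideals rules out. Everything else is the elementary closure observation $b-s\notin S$ and bookkeeping; in particular no case analysis on the parity or multiplicity of $S$ is required, and the argument does not depend on whether the count $G(i)$ treats pairs of equal gaps specially, since it only uses the equivalence ``$G(i)=0$ iff $\lambda_i$ is not a sum of two gaps''.
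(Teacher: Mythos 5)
Your proof is correct and follows essentially the same strategy as the paper's: assume $\lambda_i+s$ is a sum of two gaps, absorb $s$ into one of them to exhibit $\lambda_i$ itself as a sum of two gaps, and use the bound $\lambda_i\geq c$ to dispose of the case where the subtraction would go negative. The only difference is cosmetic: the paper establishes $\lambda_i\geq c$ by a short self-contained argument at the start of its proof (any non-zero non-gap below the conductor splits as a sum of two gaps) and then shows \emph{both} gaps exceed $\lambda_j$, whereas you import $\lambda_i\geq c$ from the BLV characterization together with the first lemma of Section~2 and subtract $s$ only from the larger gap; both routes are valid.
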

  
\begin{proof}
First of all, notice that the non-gaps smaller than the conductor do not satisfy $G(i)=0$. Indeed, if $\lambda_i<c$, there is a gap $a<\lambda_i$ with $\lambda_i-a<\lambda_1$ because otherwise $\lambda_i$ would be larger than the conductor. Now, $\lambda_i-a<\lambda_1$ is a positive gap which, together with $a$ adds up to $\lambda_i$. So, $G(i)\neq 0$. So, all the elements in $L$ are at least equal to the conductor.
  
  We need to prove that if $\lambda_i\in L$ then $\lambda_i+\lambda_j\in L$ for any $\lambda_j\in S$. We can assume that $\lambda_j\neq 0$ because otherwise it is obvious. Let $k$ be such that $\lambda_i+\lambda_j=\lambda_k$.

Suppose that $\lambda_k\not\in L$ and so that $G(k)\neq 0$. This means that there exist two gaps $a$, $b$ such that $a+b=\lambda_k$. Since $\lambda_k=\lambda_i+\lambda_j$ with $\lambda_i\geq c$, we have $\lambda_k-1,\lambda_k-2,\lambda_k-3,\dots,\lambda_k-\lambda_j\in S$ since they all are larger than $\lambda_i$. Hence both $a$ and $b$ are smaller than $\lambda_k-\lambda_j$ and so, since $a+b=\lambda_k$, they both are larger than $\lambda_j$. Then $a-\lambda_j$ is a gap of $S$ since, otherwise, $a=(a-\lambda_j)+\lambda_j\in S+S\subseteq S$.
In  particular, $(a-\lambda_j)+b$ is a sum of two gaps which adds up to $a+b-\lambda_j=\lambda_k-\lambda_j=\lambda_i$, a contradiction since $\lambda_i$ is assumed to belong to $L$ and so $D(i)=0$.
\end{proof}

\section{Puncturing sequences of isometry-dual one-point AG codes}

Let $P_1,\dots,P_n, Q$ be different rational points of a (projective, non-singular, geometrically irreducible) curve with genus $g$
and define $C_m=\{(f(P_1),\dots,f(P_n)):f\in L(mQ)\}$.
Note that it can be the case that $C_m=C_{m-1}$.
Let $W=\{m\in{\mathbb N}: L(mP)\neq L((m-1)P)\}$ be the Weierstrass
semigroup at $Q$ and let $W^*=\{0\}\cup\{m\in{\mathbb N}, m>0:C_m\neq
C_{m-1}\}=\{m_1=0,m_2,\dots,m_n\}$. 
Then, if $n>2g+2$, the set $W\setminus W^*$ is an ideal of $W$ (this is stated in different words in \cite[Corollary 3.3.]{GMRT}).
In particular, $C^0=\{0\}$ together with $C^1=C_{m_1},C^2=C_{m_2},\dots,C^n=C_{m_n}$ satisfy the isometry-dual condition if and only if $n+2g-1\in W^*$, that is, if and only if $W\setminus W^*$ is maximum sparse. This is proved in \cite[Proposition 4.3.]{GMRT}.

\begin{theorem}\label{t:inh}Suppose that the sequence $C^0,C_{m_1},C_{m_2},\dots,C_{m_n}$ as just defined satisfies the isometry-dual condition.
Let $\{P_{i_1},\dots,P_{i_{n'}}\}\subseteq\{P_1,\dots,P_n\}$,
with $2g+2<n'<n$. Define $C'_m=\{(f(P_{i_1}),\dots,f(P_{i_{n'}})):f\in
L(mQ)\}$
and $(W^*)'=\{0\}\cup\{m\in{\mathbb N}, m>0:C'_m\neq C'_{m-1}\}=\{m'_1=0,m'_2,\dots,m'_{n'}\}$.

If the sequence $\{0\},C'_{m'_1},C'_{m'_2},\dots,C_{m'_{n'}}$
satisfies the isometry-dual condition, then $n-n'\in W$.
\end{theorem}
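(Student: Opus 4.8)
The key is to translate both isometry-dual hypotheses into statements about maximum sparse ideals, via the characterization quoted from \cite{GMRT}, and then use the inclusion results of Section~3. First I would record what the hypotheses say. Since $C^0,C_{m_1},\dots,C_{m_n}$ satisfies the isometry-dual condition and $n>2g+2$, the quoted result gives that $W\setminus W^*$ is a (proper) maximum sparse ideal of $W$, with leader equal to $n+2g-1$ (this is the Frobenius-number computation: the leader is $2g-1+\#(W\setminus(W\setminus W^*)) = 2g-1+\#W^* = 2g-1+n$). Likewise, since $n'>2g+2$ and the punctured sequence satisfies the isometry-dual condition, $W\setminus (W^*)'$ is a proper maximum sparse ideal of $W$ with leader $n'+2g-1$.

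**The inclusion step.** The next step is to observe that $(W^*)'\subseteq W^*$, equivalently $W\setminus W^*\subseteq W\setminus(W^*)'$. This should follow because puncturing can only identify codewords that were previously distinct: if $C_m = C_{m-1}$ (i.e.\ $m\notin W^*$) then projecting onto a subset of coordinates still gives $C'_m = C'_{m-1}$, so $m\notin(W^*)'$. Hence $(W^*)'\subseteq W^*$ as subsets of $\mathbb{N}_0$, which is exactly $W\setminus W^* \subseteq W\setminus (W^*)'$, i.e.\ the ideal $I := W\setminus W^*$ is contained in the ideal $I' := W\setminus (W^*)'$.

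**Conclusion via Section~3.** Now both $I$ and $I'$ are proper maximum sparse ideals with $I'\supseteq I$, with leaders $\lambda := n+2g-1$ and $\lambda' := n'+2g-1$ respectively. By Lemma~\ref{l:diffnongaps} (or equivalently Lemma~\ref{l:diffcards}), $\lambda - \lambda' \in W$. But $\lambda - \lambda' = (n+2g-1) - (n'+2g-1) = n - n'$. Therefore $n - n' \in W$, which is the claim. (Alternatively, Lemma~\ref{l:diffcards} applied directly gives $\#(W\setminus I) - \#(W\setminus I') = \#W^* - \#(W^*)' = n - n' \in W$.)

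**Main obstacle.** The routine-looking but genuinely load-bearing point is the inclusion $(W^*)'\subseteq W^*$: one must be sure that puncturing never creates a "new jump," i.e.\ that $C_m=C_{m-1}\Rightarrow C'_m=C'_{m-1}$. This is immediate once one notes that $C'_m$ is the coordinate projection $\pi$ of $C_m$ onto the surviving positions, so $C_m = C_{m-1}$ forces $\pi(C_m)=\pi(C_{m-1})$. A secondary point to check carefully is that the punctured ideal $W\setminus (W^*)'$ is genuinely \emph{proper}, so that Lemma~\ref{l:diffnongaps} applies; this holds because $n'<n$ and, more basically, because the isometry-dual hypothesis on the punctured sequence together with $n'>2g+2$ forces it (a $W^*$ that exhausts $W$ near the relevant range would contradict the finiteness/sparsity count). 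Everything else is a direct substitution.
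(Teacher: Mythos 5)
Your proposal is correct and follows essentially the same route as the paper: translate both isometry-dual hypotheses into the statement that $W\setminus W^*$ and $W\setminus (W^*)'$ are maximum sparse ideals, observe that $(W^*)'\subseteq W^*$ since puncturing cannot create new jumps, and conclude via Lemma~\ref{l:diffcards} that $n-n'=\#W^*-\#(W^*)'\in W$. Your additional care about properness of the ideals and the explicit leader computation is a sensible elaboration of points the paper leaves implicit, but does not change the argument.
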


\begin{proof}
  Since the sequence $C^0,C_{m_1},C_{m_2},\dots,C_{m_n}$ satisfies the
  isometry-dual condition, the ideal $W\setminus W^*$ is maximum
  sparse. 

If the sequence
$\{0\},C'_{m'_1},C'_{m'_2},\dots,C_{m'_{n'}}$ also satisfies the
isometry-dual condition, then $W\setminus (W^*)'$ is also sparse.
  
Since $C'_m\neq C'_{m-1}$ implies $C_m\neq C_{m-1}$, we have $(W^*)'\subseteq W^*$
and so $W\setminus (W^*)'\supseteq S\setminus W^*$.
Now, by Lemma~\ref{l:diffcards}, $\# W^*-\# (W^*)'=n-n'\in W$.
\end{proof}

The conclusion of Theorem~\ref{t:inh} is that, given a sequence of AG
codes evaluating at a set of evaluation points $P_1,\dots,P_n$ the
functions having only poles at a defining point $Q$, with the sequence
satisfying the isometry-dual condition, one needs to take out a number
of evaluation points at least equal to the multiplicity (smallest
non-zero non-gap) of the Weierstrass semigroup of $Q$ in order to obtain another punctured sequence satisfying the isometry-dual property.

\begin{example}
  The Hermitian curve over ${\mathbb F}_{q^2}$, with affine equation $x^{q+1}=y^q+y$, has $q^3$ affine points and one point $P_\infty$ at infinity.
  A basis of $\cup_{m\geq 0}L(mP_\infty)$ is given in increasing order by the list
  $\left((x^{d-i}y^i)_{i\in \{0,\dots,\min(q-1,d)\}}\right)_{d\geq 0}$.

  Consider the Hermitian curve over ${\mathbb F}_4$.
  For this, let $\alpha$ be the class of $x$ in ${\mathbb F}_4=({\mathbb Z}_2[x])/(x^2+x+1)$.

  The $8$ points of the Hermitian curve over ${\mathbb F}_4$ are 
  $P_1=(0, 0), P_2=(\alpha,\alpha), P_3=(\alpha + 1, \alpha), P_4=(1, \alpha), P_5=(\alpha, \alpha + 1), P_6=(\alpha + 1, \alpha + 1), P_7=(1, \alpha + 1), P_8=(0, 1)$. The curve has genus $g=1$.

  The Weierstrass semigroup at $P_\infty$ is $W=\{0,2,3,4,5,6,\dots\}$ and one basis for $\cup_{m\geq 0}L(mP_\infty)$ is 
$${\mathcal B}=\{1, x, y, x^2, xy, x^3, x^2y, x^4, x^3y, x^5, x^4y, x^6, x^5y, x^7, x^6y, x^8, x^7y,\dots\}.$$

For each subset of points $\mathcal P\subseteq\{P_1,\dots,P_8\}$ we analyzed whether $\#{\mathcal P}+2g-1\in(W^*)'$,
that is, if the evaluation vector of the $(\#{\mathcal P}+g)$th function of ${\mathcal B}$ at the points of ${\mathcal P}$ is 
not linearly dependent of the set of the evaluation vectors of the previous functions in ${\mathcal B}$ at the points of ${\mathcal P}$.

In Figure~\ref{fig} we depicted all sets of points satisfying $\#{\mathcal P}+2g-1\in(W^*)'$, denoting the set $P_{i_1},P_{i_2},\dots,P_{i_s}$ by just
$i_1i_2\dots i_s$. We draw an edge for every inclusion relation among sets of points. 

According to the assumption $n>2g+2$ we draw a line separating the sets of point such that $\#{\mathcal P}>4$. For the sets of points on the left of this line,  being in the graph is equivalent to satisfying the isometry-dual condition (by \cite[Proposition 4.3.]{GMRT}).

One can check that there are only edges between sets of points whose cardinality difference is at least two. This is coherent with Theorem~\ref{t:inh} and the fact that $W=\{0,2,3,4,5,6,\dots\}$.

\end{example}

\begin{figure}
  \caption{Hierarchy of sets of points of the Hermitian curve over ${\mathbb F_4}$.}
  \label{fig}
\resizebox{\textwidth}{!}{
\begin{tikzpicture}
{\Huge\bfseries
\SetGraphUnit{3.75}
             {    \Vertex[x=0,y=0]{12345678}
    \Vertices[x=10,y=0,dir=\SO]{line}{123568,124578,134678,234567}
    \Vertices[x=20,y=30,dir=\SO]{line}{12348,12357,12467,13456,15678,23678,24568,34578}
    \draw[dashed] (25, 35) -- (25, -25);
    \Vertices[x=30,y=0,dir=\SO]{line}{1258,1368,1478,2356,2457,3467}
    \Vertices[x=40,y=30,dir=\SO]{line}{126,137,145,234,278,358,468,567}
    \Vertices[x=50,y=0,dir=\SO]{line}    {18,47,36,25}
  }
  \Edge(18)(1258)
  \Edge(18)(1368)
  \Edge(18)(1478)
  \Edge(25)(1258)
  \Edge(25)(2356)
  \Edge(25)(2457)
  \Edge(36)(1368)
  \Edge(36)(2356)
  \Edge(36)(3467)
  \Edge(47)(1478)
  \Edge(47)(2457)
  \Edge(47)(3467)
  \Edge(18)(12348)
  \Edge(18)(15678)
  \Edge(25)(12357)
  \Edge(25)(24568)
  \Edge(36)(13456)
  \Edge(36)(23678)
  \Edge(47)(12467)
  \Edge(47)(34578)
  \Edge(126)(12467)
  \Edge(137)(12357)
  \Edge(145)(13456)
  \Edge(234)(12348)
  \Edge(278)(23678)
  \Edge(358)(34578)
  \Edge(468)(24568)
  \Edge(567)(15678)
  \Edge(126)(123568)
  \Edge(358)(123568)
  \Edge(145)(124578)
  \Edge(278)(124578)
  \Edge(137)(134678)
  \Edge(468)(134678)
  \Edge(234)(234567)
  \Edge(567)(234567)
  \Edge(1258)(123568)
  \Edge(1368)(123568)
  \Edge(2356)(123568)
  \Edge(1258)(124578)
  \Edge(1478)(124578)
  \Edge(2457)(124578)
  \Edge(1368)(134678)
  \Edge(1478)(134678)
  \Edge(3467)(134678)
  \Edge(2356)(234567)
  \Edge(2457)(234567)
  \Edge(3467)(234567)
  \Edge(12467)(12345678)
  \Edge(12357)(12345678)
  \Edge(13456)(12345678)
  \Edge(12348)(12345678)
  \Edge(23678)(12345678)
  \Edge(34578)(12345678)
  \Edge(24568)(12345678)
  \Edge(15678)(12345678)
  \Edge(123568)(12345678)
  \Edge(124578)(12345678)
  \Edge(134678)(12345678)
  \Edge(234567)(12345678)
}
\end{tikzpicture}}
\end{figure}
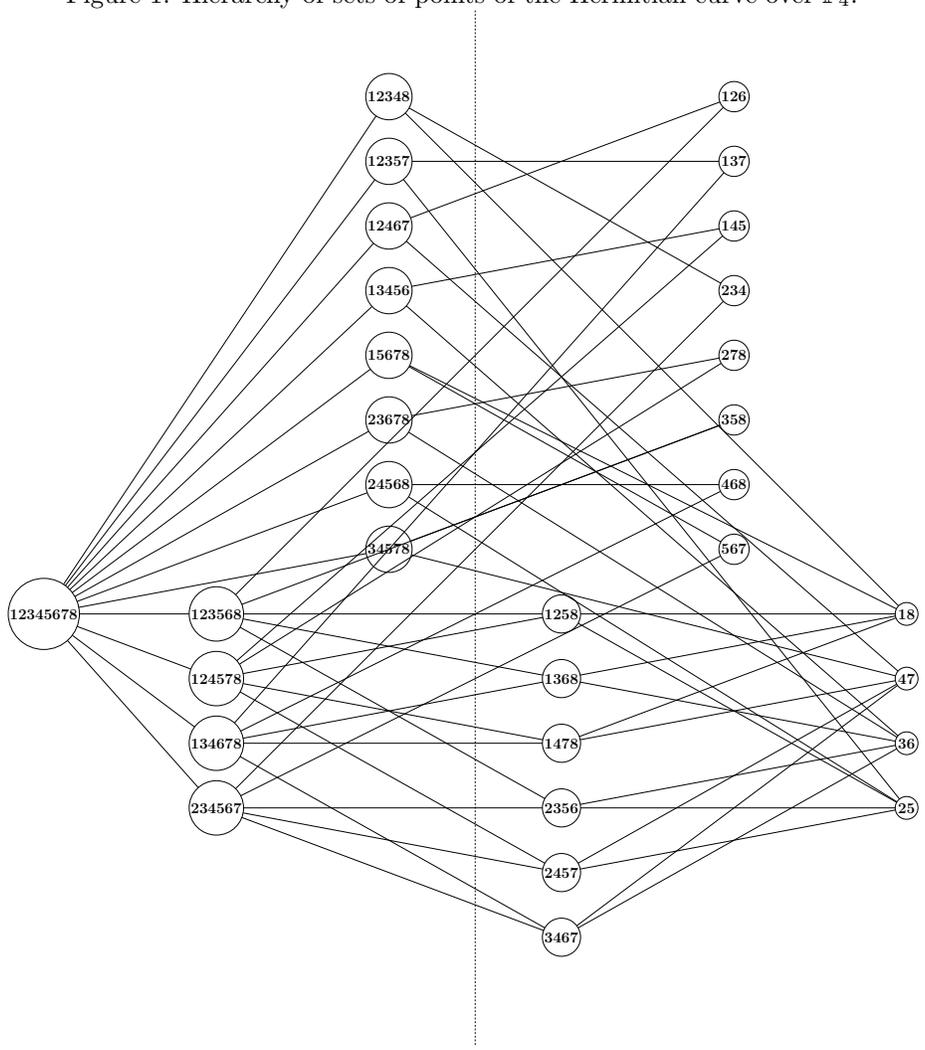

\section{Acknowledgment}
The author was supported by the Spanish government under grant TIN2016-80250-R and by the Catalan government under grant 2014 SGR 537.

\bibliographystyle{plain}

\end{document}